\DeclareMathAlphabet{\mathpzc}{OT1}{pzc}{m}{it}
\DeclareMathOperator{\SV}{SV}
\DeclareMathOperator{\RND}{RND}
\newtheorem{theorem}{Theorem}[section]
\newtheorem*{theorem*}{Theorem}				    			%Theorem
\newtheorem{observation}[theorem]{Observation}			                %Lemma
\newtheorem{?}[theorem]{Question}	                        %Question
\newtheorem{proposition}[theorem]{Proposition}				%Proposition
\newtheorem{corollary}[theorem]{Corollary}					%Corollary
\theoremstyle{remark} % Changes style of all following newtheorems e.g. def or remark
\newcommand{\set}[1]{\left\{#1\right\}}				        %Set
\newcommand{\NN}[0]{\mathbb{N}}
\definecolor{orcidlogocol}{HTML}{A6CE39}
\DeclareRobustCommand{\orcidicon}{%
	\begin{tikzpicture}
	\draw[lime, fill=lime] (0,0) 
	circle [radius=0.16] 
	node[white] {{\fontfamily{qag}\selectfont \tiny ID}};
	\draw[white, fill=white] (-0.0625,0.095) 
	circle [radius=0.007];
	\end{tikzpicture}
	\hspace{-2mm}
}
\newcommand{\orcid}[1]{\href{https://orcid.org/#1}{\textcolor[HTML]{A6CE39}{\orcidicon}}}
\begin{document}

\title{Generating Dominating Sets Using Locally-Defined Centrality Measures}

\author{Hunter~Rehm\orcid{0000-0002-1284-2032}, Robert~Kassouf-Short\orcid{0000-0003-2087-5528}, Puck~Rombach\orcid{0000-0002-8374-0797}}

%%%%%%%%%%%%%%%%%%%%%%%%%%%%%%%%%%%%%%%%%%%%%%%%%%%%   

% The paper headers
% \markboth{Draft as of \today}%
% {Shell \MakeLowercase{\textit{et al.}}: Bare Demo of IEEEtran.cls for IEEE Journals}

% make the title area
\maketitle
% \tableofcontents

\begin{abstract}
    The dominating set problem has many practical applications but is well-known to be NP-hard. Therefore, there is a need for efficient approximation algorithms, especially in applications such as ad hoc wireless networks. 
    Most distributed algorithms proposed in the literature assume that each node has knowledge of the network structure. We propose a distributed approximation algorithm that uses two rounds of communication, and where each node has only local information, both in terms of network structure and dominating set assignment. First, each node calculates a local centrality measure to determine whether it is part of the dominating set $D$. The second round guarantees $D$ is a dominating set by adding any non-dominated nodes. We compare several centrality measures and show that the Shapley value, introduced in game theory, is theoretically motivated and performs well in practice on several synthetic and real-world networks.
     \\
{\bf Keywords -- } Network, Dominating, Approximation, Shapley.\\
{\bf MSC --} 05C69, 05C85 .
\end{abstract}

\section{Introduction}\label{sec:intro}

A dominating set in a network is a subset of nodes so that every node is either in the dominating set or has a neighbor in the set. Dominating sets and their variations have many applications~\cite{levin2020combinatorial,liu2010survey,yu2013connected}. Often, the objective is to find a dominating set of minimum size. 

Deciding whether or not a graph has a dominating set of size $k$ is NP-complete~\cite{garey1979guide}.
The current fastest exact algorithm for identifying a minimum dominating set was introduced in 2011 by Van Rooij and Bodlaender~\cite{van2011exact} and runs in $O(1.4969^n)$ time. Certain classes of graphs admit polynomial-time approximation schemes for the minimum dominating set problem. One example of such a  class is unit disk graphs. This class includes random geometric graphs, as well as many types of ad hoc wireless networks~\cite{cheng2003polynomial,hunt1998nc,nieberg2008approximation}. Instead of identifying a dominating set globally, there are also distributed algorithms where decisions are made at the level of individual nodes with limited rounds of communications~\cite{barenboim2018fast,dai2004extended,ghaffari2018derandomizing,jia2002efficient,liang2000virtual,wu2001dominating}. In most of these distributed algorithms, it is assumed that each node has full global information regarding the structure of the network, but only local information regarding the dominating set assignment. In some applications, nodes may have none or limited global information. A pertinent example of such an application is ad-hoc and wireless sensor networks~\cite{garcia2007wireless,khan2016wireless,singh2018survey,yick2008wireless}. 

The Space Communication and Navigation (SCaN) program office manages NASA's space communication activities.
One of their goals is to understand satellite communication, including routing, radio technology, and network modeling.
In the future of space science, autonomous robotic missions to other planets will be a primary means for humans to explore our solar system.
In particular, satellite swarm missions akin to HelioSwarm have the opportunity to offer us great insight into the nature of other planets \cite{HelioSwarm-RAJAN2022554}.
The power of satellite swarm missions comes from the wide variety of different roles that small satellites can serve.
However, to enable interplanetary satellite swarm missions, it will be essential that satellites can autonomously organize and optimize the variety of roles available to the swarm.

One motivation for this paper comes from this consideration.
Two general roles that satellites fall into are communications satellites and data collection satellites.
Future swarms may have the flexibility to switch between these roles as the needs of the swarm demand.
To enable the most efficient distribution of communications and data collection satellites, each satellite needs to be able to detect and adapt to the needs of the network autonomously.
Communications satellites should form a dominating set over the swarm so that each data collector has the capacity to collate and transmit their data, but we also want to minimize the number of communications satellites so as to maximize the amount of data the swarm can collect.

Finding a minimum dominating set in these types of applications is often not feasible, because there may not be a central processor with access to the full network structure. Instead, the objective is to find an approximately minimum dominating set in a distributed manner where nodes can access only local information in the network.

Let $G(V,E)$ be a simple, undirected  graph on $n$ nodes, and let $\gamma(G)$ be the \emph{domination number} of $G$; the cardinality of a minimum dominating set. One of the most fundamental results in the theory of dominating sets is an upper bound in terms of the minimum degree $\delta(G)$ of $G$:
\begin{equation}\label{alonupper}
    \gamma(G) \leq \frac{1+\ln(1+\delta(G))}{1+\delta(G)}n. 
\end{equation}  
This result was proved independently by~\cite{alon2016probabilistic,arnautov1974estimation,lovasz1975ratio,payan1975nombre}. The proof given in~\cite{alon2016probabilistic} has become a standard example of the power of the probabilistic method. It gives a simple, two-step probabilistic algorithm that yields a dominating set whose cardinality is in expectation of the upper bound in Inequality~\ref{alonupper}. The algorithm works as follows. In the first round, each vertex is assigned to a set $X$ with a fixed probability $p$, independently of other vertices. In the second round, each vertex not dominated by $X$ is added to a set $Y$. The output is an approximately minimum dominating set $X \cup Y$. This algorithm is both fast, as it is linear in the size of the graph, and highly local, as decisions are made at the vertex level with communication only to immediate neighbors.

Wu and Li~\cite{wu2001dominating}, proposed a similar algorithm with a deterministic first step. It uses the local clustering coefficient to select a dominating set~\cite{watts1998collective}.
This algorithm assigns a node $v$ to the dominating set if it has a clustering coefficient less than $1$. 
In this case, a connected dominating set is guaranteed, and there is no need for a second round.

Inspired by this style of algorithm, we propose an algorithm for generating dominating sets with the help of local centrality measures.
Our algorithm is not guaranteed to generate a minimum dominating set, but we show that it performs well in practice in finding small dominating sets.
In the first step, the algorithm selects vertices based on a centrality measure meeting a certain threshold. In the second step, more vertices are added as needed to guarantee a dominating set.

We compare several different centrality measures on synthetic and real-world networks. 
The best-performing measure in our analysis is the Shapley value, introduced in game theory~\cite{shapley1971cores}.
It is used in various contexts to measure the importance of an actor in game theory~\cite{ bozzo2016theory,bozzo2015vulnerability,michalak2013efficient,suri2008determining}.
There are various definitions of the Shapley value in the literature. Here we use the definition from~\cite{bozzo2015vulnerability}, which defines the Shapley value of a node as the sum of the reciprocals of the degrees of its neighbors. (These authors then subtract 1 from each value, which we will not do.)

In Section \ref{sec:maincontent}, we define the two-step algorithm and a few centrality measures that appear to effectively select small dominating sets.
We compare their performance in the two-step algorithm on various synthetic and real-world networks in Section~\ref{sec:SimulatedData} and \ref{sec:Application}, respectively. Section~\ref{sec:oddities} provides counterexamples to a few conjectures about the two-step algorithm that we developed from the simulations. Lastly, in Section~\ref{sec:Conclusion}, we discuss a few natural future directions.

\section{Dominating set algorithms using thresholds}\label{sec:maincontent}

In Section \ref{sec:CoreAlgorithm} we outline the two-step algorithm explicitly.
In Section \ref{sec:CentralityMeaures}, we define the centrality measures considered in this study with a few basic observations that motivate their use in this context.

\subsection{The two-step algorithm}\label{sec:CoreAlgorithm}
Let $c(v)$ be a locally computed node centrality measure on the node set of a network $V(G)$, and $\tau\geq0$ a threshold constant. 
The two-step algorithm has two rounds: the first round selects all nodes $v$ such that $c(v)>\tau$, and the second round selects all nodes $v$ such that neither $v$ nor any of its neighbors were selected in the first round.
The second step ensures that the final set of selected nodes forms a dominating set in $G$. Let $N(v) = \{u\in V|(u,v)\in E(G)\}$.

\begin{center}
% \vspace{.5em}
\noindent\textbf{Two-step Algorithm}
% \vspace{-.5em}
\end{center}
\begin{enumerate}
    \item Let $X$ be the set of nodes $v$ with $c(v)>\tau$. 
    \item Let $Y$ be the set of nodes $v$ such that $(N(v)\cup \{v\})\cap X=\emptyset$.
    \item Let $D=X \cup Y$.
\end{enumerate}

This algorithm can be executed in a distributed manner. 
To perform the first step, each node sends the necessary information to its neighbors so that each of them can calculate $c(v)$. 
Then, each node communicates to its neighbors whether they are in the set and carries out the second step. 
All of the measures we consider in Section \ref{sec:CentralityMeaures} can be computed using one round of communication with the immediate neighborhood of each node, supposing each node knows its degree. However, local or not, any centrality measure can be used in the algorithm. If the algorithm is implemented in a distributed manner and a local centrality measure is used, it runs in time $O(\Delta (G))$, where $\Delta(G)$ is the maximum degree of the network. In the application of ad-hoc wireless networks, this often implies sublinear or even constant run time.

This algorithm performs best when $X$ is close to a dominating set so that $Y$ is small.
Thus, the goal of the measure $c(v)$ is to capture nodes in $X$ likely to appear in a minimum dominating set.
In Section \ref{sec:CentralityMeaures}, we define a few centrality measures of interest.
We compare their performance in the two-step algorithm on various synthetic and real-world networks in Section~\ref{sec:SimulatedData} and \ref{sec:Application}, respectively.

\subsection{Centrality measures}\label{sec:CentralityMeaures}

Since we want our two-step algorithm to be local, we focus on centrality measures that can be determined locally. Specifically, we consider centrality measures that depend only on the immediate neighborhood of a node. 

{\bf Uniform random measure (URM):}
Alon investigates which threshold minimizes the expected size of the set constructed in the two-step algorithm and gives an upper bound on the domination number~\cite{alon1990transversal}. Refer to Section~\ref{sec:intro} for more information.

{\bf Clustering coefficient (ICC):}
In~\cite{wu2001dominating}, Li and Wu show that the collection of all nodes with two non-adjacent neighbors is a connected dominating set.
This is quantified by finding the proportion of non-adjacent neighbors, known as the \textit{clustering coefficient}~\cite{watts1998collective} of a node $v$
$$CC(v) = \frac{t_v}{\binom{k_v}{2}} = \frac{2t_v}{k_v(k_v-1)}.$$
Here, $t_v$ is the number of triangles that node $v$ is a part of, and $k_v$ is the degree of node $v$.

Based on simulated data, we suspect that the nodes representative of stronger dominating set candidates tend to have lower values of $CC(v)$.
For this reason, we let 
$$ICC(v) = 1-CC(v) = \frac{\binom{k_v}{2} - t_v}{\binom{k_v}{2}}.$$
It is worth noting that $ICC$ is not a new centrality measure but rather our adjustment of the clustering coefficient that allows for more convenient comparisons with the other centrality measures, as they all tend to correlate positively with the likelihood of appearing in minimum dominating sets.

Using the language of our paper, Li and Wu show that the first step of two-step algorithms constructs a connected dominating set with measure $ICC$ and a threshold of $0$.

\textbf{Relative neighbor degree ($\RND$):} Ai, Li, Su, Jiang, and Xiong define the neighbor-degree centrality of $v$ as $ND(v) = \frac{\sum_{u\in N(v)}k_u}{k_i}$~\cite{ai2016node}.
We propose the \textit{relative neighbor-degree centrality} of $v$ as $$\RND(v) = \frac{k_v}{ND(v)} = \frac{k_v^2}{\sum_{u\in N(v)}k_u}$$
which measures how the average degree of nodes in $N(v)$ compares to the degree of $v$ itself. 
If a node $v$ has a relative neighbor degree of $1$, then the average degree of the neighbors of $v$ equals the degree of $v$. 
If $v$ has a relative neighbor degree larger than $1$, then $v$ makes for a good dominating set candidate as its degree is larger than its neighbor's degree on average.

\textbf{Shapley value ($\SV$):} The Shapley value was originally proposed in \cite{shapley1971cores}, but was not introduce in network theory until later. Bozzo, Franceschet, and Rinaldi describe the Shapley value of a node $v$ in a network as the sum of the reciprocals of the neighbors' degrees, 
$$SV(v) = \sum_{u\in N(v)}\frac{1}{deg(u)}\cite{bozzo2015vulnerability,shapley1971cores}.$$
A node with a high Shapley value is likely adjacent to many low-degree nodes.

For the remainder of the paper, we compare all four measures and specifically focus on how $\SV$ and $\RND$, as they seem to perform best in this setting.
From some simple analysis, these two measures are related.
For instance, Proposition \ref{prop:IDbigger} shows that the number of nodes selected in the first round of the two-step algorithm with $\SV$ is always greater than that with $\RND$.

\begin{proposition}\label{prop:IDbigger}
For node $v$ in a network $N$, if $\RND(v)\geq \tau$, then $\SV(v)\geq \tau$.
\end{proposition}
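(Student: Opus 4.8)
The plan is to reduce the threshold statement to a single pointwise inequality and then recognize that inequality as an instance of Cauchy--Schwarz. Since the hypothesis and the conclusion share the same constant $\tau$, it suffices to show that $\SV(v) \geq \RND(v)$ holds for every node $v$. Once this is established, the assumption $\RND(v) \geq \tau$ immediately forces $\SV(v) \geq \RND(v) \geq \tau$, and $\tau$ plays no further role in the argument.

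So the entire content is the claim
$$\sum_{u \in N(v)} \frac{1}{k_u} \;\geq\; \frac{k_v^2}{\sum_{u \in N(v)} k_u}.$$
Here the key structural observation is that $v$ has exactly $k_v$ neighbors, so $|N(v)| = k_v$. I would then clear the positive denominator $\sum_{u\in N(v)} k_u$ and aim to prove
$$\left(\sum_{u \in N(v)} \frac{1}{k_u}\right)\left(\sum_{u \in N(v)} k_u\right) \;\geq\; k_v^2 = |N(v)|^2.$$
This is exactly the Cauchy--Schwarz inequality applied to the vectors with entries $1/\sqrt{k_u}$ and $\sqrt{k_u}$ indexed by $u \in N(v)$: their inner product equals $\sum_{u} 1 = |N(v)|$, and squaring yields the bound. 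Equivalently, one may invoke the AM--HM inequality on the neighbor degrees $\{k_u\}_{u\in N(v)}$, which asserts that their arithmetic mean is at least their harmonic mean, and rearrange into the same form.

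The calculation itself is routine, so the only genuine subtlety is the degenerate case, and I expect this to be the main point requiring care rather than any real obstacle. If $v$ is isolated then $k_v = 0$, the sum $\sum_{u\in N(v)} k_u$ is empty, and $\RND(v) = 0^2/0$ is undefined; I would therefore state the proposition under the standing assumption that $v$ has at least one neighbor, so that every $k_u \geq 1$ and the denominator is strictly positive. Under that assumption the inequality chain above goes through verbatim, with equality precisely when all neighbor degrees are equal, and no further difficulty arises.
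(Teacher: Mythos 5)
Your proposal is correct and is essentially the paper's argument in different clothing: the paper applies Jensen's inequality to $x\mapsto 1/x$ on the neighbor degrees, which is exactly the AM--HM inequality you cite, and your Cauchy--Schwarz step $\left(\sum_{u\in N(v)}\tfrac{1}{k_u}\right)\left(\sum_{u\in N(v)}k_u\right)\geq k_v^2$ is the same inequality rearranged. Your version is slightly cleaner in that it isolates the pointwise statement $\SV(v)\geq \RND(v)$ (and flags the isolated-vertex degeneracy, which the paper leaves implicit), but the key step is identical.
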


\begin{proof}
Suppose that $\RND(v)\geq \tau$.
Then the average degree of a node in $N_v$ is less than $k_v/\tau$.
By Jensen's inequality, the average of the set $\{1/k_u|u\in N_v\}$ is at least $\tau/k_v$.
Hence, $\SV(v) = \sum_{u \in N_v} 1/k_u\geq k_v\frac{\tau}{k_v} = \tau$.
\end{proof}

We give the expected value of $\SV$ in Proposition \ref{prop:average} below.
This proposition shows that even though $\SV$ is a locally computed measure, we know how each node will compare to the average without knowing anything about the network's structure. The expected value of $\RND$ is not as easily described.

\begin{proposition}\label{prop:average}
Let $G = (V,E)$ graph.
The expected Shapley value of a node $v\in V$ is $1$.
\end{proposition}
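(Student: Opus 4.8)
The plan is to interpret the \emph{expected} Shapley value as the average of $\SV(v)$ over a node $v$ drawn uniformly at random from $V$, so that the claim becomes the purely combinatorial identity
\[
\frac{1}{n}\sum_{v\in V}\SV(v) = 1,
\qquad\text{equivalently}\qquad
\sum_{v\in V}\SV(v) = n.
\]
First I would substitute the definition $\SV(v)=\sum_{u\in N(v)} 1/k_u$ and write the left-hand side as a double sum $\sum_{v\in V}\sum_{u\in N(v)} 1/k_u$, which ranges over all ordered pairs $(v,u)$ with $u$ adjacent to $v$, i.e.\ over the (directed) edges of $G$.

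The key step is to swap the order of summation (a Fubini / double-counting argument). For a fixed node $u$, the term $1/k_u$ is contributed once for every $v$ with $u\in N(v)$, and the number of such $v$ is exactly the number of neighbors of $u$, namely $k_u$. Collecting these,
\[
\sum_{v\in V}\sum_{u\in N(v)}\frac{1}{k_u}
= \sum_{u\in V} k_u\cdot\frac{1}{k_u}
= \sum_{u\in V} 1
= n.
\]
Dividing by $n$ then gives the claimed expectation of $1$.

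The only point requiring care is that $1/k_u$ must be well defined, so the statement implicitly assumes $G$ has no isolated vertices (minimum degree $\delta(G)\ge 1$); I would note this hypothesis explicitly, since otherwise the Shapley value itself is ill-defined at such nodes. Beyond that, there is no real obstacle: the argument is a clean exchange of summation, and the cancellation $k_u\cdot(1/k_u)=1$ is what makes the per-node count collapse to a single $1$ for every vertex, independently of the graph's structure. This is precisely the structural insensitivity highlighted in the discussion preceding the proposition, and contrasts with $\RND$, whose average does not simplify in the same way because the degree weights do not cancel.
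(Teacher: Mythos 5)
Your proof is correct, and it reaches the identity $\sum_{v\in V}\SV(v)=n$ by an explicit exchange of summation, whereas the paper packages the same computation in linear-algebraic language: it forms the column-stochastic matrix $M$ with $M_{ij}=A_{ij}/k_j$, observes that $(M\mathbf{1})_i=\SV(i)$, and invokes the fact that a stochastic matrix preserves the average of a vector's entries. These are the same underlying fact --- the paper's ``preserves the average'' is precisely the statement that each column of $M$ sums to $1$, i.e.\ $\sum_i A_{ij}/k_j = k_j\cdot(1/k_j)=1$, which is your cancellation --- but your version is more elementary and self-contained, requiring no appeal to properties of stochastic matrices, while the paper's version situates $\SV$ inside a random-walk/PageRank-style framework that may be suggestive for further analysis. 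One point in your favor: you correctly flag that the argument (in either form) needs $k_u\ge 1$ for all $u$, i.e.\ no isolated vertices, since otherwise $1/k_u$ (equivalently, the column-stochasticity of $M$) breaks down; the paper leaves this hypothesis implicit. A minor stylistic note: the paper states the proposition as an ``expected value,'' and you are right to make explicit that this means the average over a uniformly random vertex, since no other distribution is specified.
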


\begin{proof}
Let $A$ be the adjacency matrix of a network $N = (V,E)$ with $n = |V|$.
Let $M$ be the stochastic matrix with
$M_{ij} = \frac{A_{ij}}{k_j}$
and $\mathbf{1}$ be the $n$-dimensional vector with $1$ in each entry.
Notice that $(M\cdot \mathbf{1})_i = \SV_i$ for each $i\in V$.
Since the stochastic matrix preserves the average of vectors, then the expected Shapley value is $1$.
\end{proof}

From Proposition \ref{prop:average}, $\tau=1$ is a natural choice of threshold if we use $\SV$ in our two-step algorithm.
To round out this section, we give additional evidence suggesting that $\tau=1$ is a reasonable choice of threshold in a vacuum.

\begin{observation}\label{obs:degreeone}
If $\deg(v)=1$, then $v$ is adjacent to a node $u$ with $\SV(u)\geq 1$.
\end{observation}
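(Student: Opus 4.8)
The plan is to reduce the statement to the single term in the Shapley sum contributed by $v$ itself. First I would observe that $\deg(v)=1$ means $v$ has exactly one neighbor; call it $u$, so that $N(v)=\{u\}$ and, symmetrically, $v\in N(u)$. The target quantity is $\SV(u)=\sum_{w\in N(u)} 1/\deg(w)$, and the key step is simply to isolate the summand corresponding to $w=v$. Since $\deg(v)=1$, that summand equals $1/\deg(v)=1$. Every other term $1/\deg(w)$ with $w\in N(u)\setminus\{v\}$ is strictly positive because degrees are at least $1$, so dropping all of them can only decrease the sum. Therefore $\SV(u)\geq 1/\deg(v)=1$, which is exactly the claimed inequality for the particular neighbor $u$ of $v$.

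I do not expect any genuine obstacle here: the entire content is the elementary fact that a degree-one node injects the maximum possible reciprocal-degree weight of $1$ into its unique neighbor's Shapley value, so no term besides that one is even needed. The only points worth stating carefully are that the neighbor $u$ is well defined and unique (guaranteed by $\deg(v)=1$), and that the observation asserts the existence of \emph{some} adjacent node of Shapley value at least $1$ rather than a property of all neighbors, so naming $u$ explicitly suffices. This also gives a clean intuition complementing Proposition~\ref{prop:average}: pendant vertices always push their neighbor to or above the average Shapley value of $1$, which is consistent with such a neighbor being a natural dominating-set candidate under the threshold $\tau=1$.
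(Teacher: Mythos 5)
Your proof is correct and is exactly the argument the paper leaves implicit (the statement is given as an unproved Observation): the unique neighbor $u$ of $v$ receives the term $1/\deg(v)=1$ in $\SV(u)$, and all remaining terms are nonnegative. Nothing further is needed.
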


\begin{observation}\label{obs:RNDgeqone}
    If $\deg(v) \geq \deg(u)$ for all $u \in N(v)$, then $\RND(v)\geq 1$.
\end{observation}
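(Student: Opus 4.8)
The plan is to work directly from the definition $\RND(v) = k_v^2 / \sum_{u\in N(v)}k_u$ and to convert the hypothesis into an upper bound on the denominator. First I would restate the hypothesis in the paper's degree notation: $\deg(v)\geq\deg(u)$ for every $u\in N(v)$ is precisely the statement that $k_v\geq k_u$ for each neighbor $u$ of $v$.

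Next I would sum this per-neighbor inequality over all $u\in N(v)$. Since $v$ has exactly $k_v$ neighbors, each of the $k_v$ summands is bounded above by $k_v$, which yields
$$\sum_{u\in N(v)}k_u \;\leq\; \sum_{u\in N(v)}k_v \;=\; k_v\cdot k_v \;=\; k_v^2.$$
With this bound in hand the conclusion is immediate: substituting into the definition gives $\RND(v) = k_v^2 / \sum_{u\in N(v)}k_u \geq k_v^2/k_v^2 = 1$, which is exactly the claim.

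The only point requiring any care is the bookkeeping in the summation step, namely recognizing that the number of terms equals $k_v$ because $|N(v)| = \deg(v) = k_v$, so that bounding each summand by $k_v$ produces the clean value $k_v^2$. Beyond this there is no real obstacle; the statement is an elementary consequence of the definition once the hypothesis is phrased as a per-neighbor degree inequality. (One could note in passing the degenerate case $k_v = 0$, where $\RND(v)$ is a $0/0$ expression and the statement is vacuous, but this does not arise for any vertex with a neighbor.)
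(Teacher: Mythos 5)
Your proof is correct, and since the paper states this as an observation without any written proof, your argument---bounding $\sum_{u\in N(v)}k_u$ above by $k_v\cdot k_v=k_v^2$ using the per-neighbor hypothesis and substituting into the definition of $\RND$---is exactly the intended one-line justification. The remark about the degenerate case $k_v=0$ is a fine extra precaution but not needed.
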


According to Observation \ref{obs:degreeone}, a threshold of $\tau = 1$ will, among others, select all nodes adjacent to degree $1$ nodes in the first round of the algorithm.
We will see in Section \ref{sec:SimulatedData} and \ref{sec:oddities}, other thresholds better find minimal dominating sets using the two-step algorithm. 

\section{Simulated data}\label{sec:SimulatedData}

We test our algorithm on simulated graphs generated from three random graph families. 
We generate $100$ networks for each family with $100$ nodes and an expected degree of $10$.
Below we describe the families and present the computational results.

\textbf{Erd\H{o}s-R\'enyi Model}~\cite{erdos1959random}
In Figure \ref{fig:erdos}, we compare the average size of a dominating set created by the two-step algorithm using a variety of thresholds over instances of Erd\H{o}s-R\'enyi graphs.
We sample each graph using 100 nodes and a probability $p = 0.1$. 

\textbf{Random Geometric Graphs}~\cite{clark1990unit}
In Figure \ref{fig:geo}, we compare the size of the dominating set found by the two-step algorithm with different threshold values in random geometric graphs on the unit square with $100$ nodes and a radius $r= 0.178$. The motivation for using this radius is that the average degree $100(0.178)^2$ is about $10$, consistent across the other random graph families.

\textbf{Chung-Lu Model}~\cite{aiello2001random,chung2002average,chung2002connected}
We use the Chung-Lu model to sample graphs with a longer-tailed degree distribution and create the degree sequence using a negative binomial degree distribution.
Figure \ref{fig:chung} compares the size of the dominating set for each threshold on 100 different graphs sampled using the Chung-Lu model.
We create the degree sequence using the negative binomial distribution with $n = 1$ and $p = 0.1$. 

In each of Figures \ref{fig:erdos}, \ref{fig:geo}, and \ref{fig:chung}, $\SV$ and $\RND$ construct the smallest dominating sets across all thresholds, with $\SV$ slightly outperforming $\RND$.
Notice that the minimal dominating set in each family typically occurs at a threshold greater than 1 rather than precisely at 1.
Because $\SV$ generally performed best, we focus on the Shapley value in Section \ref{sec:Application}.

\begin{figure}[htbp]
    \centering
    \includegraphics[scale = .48]{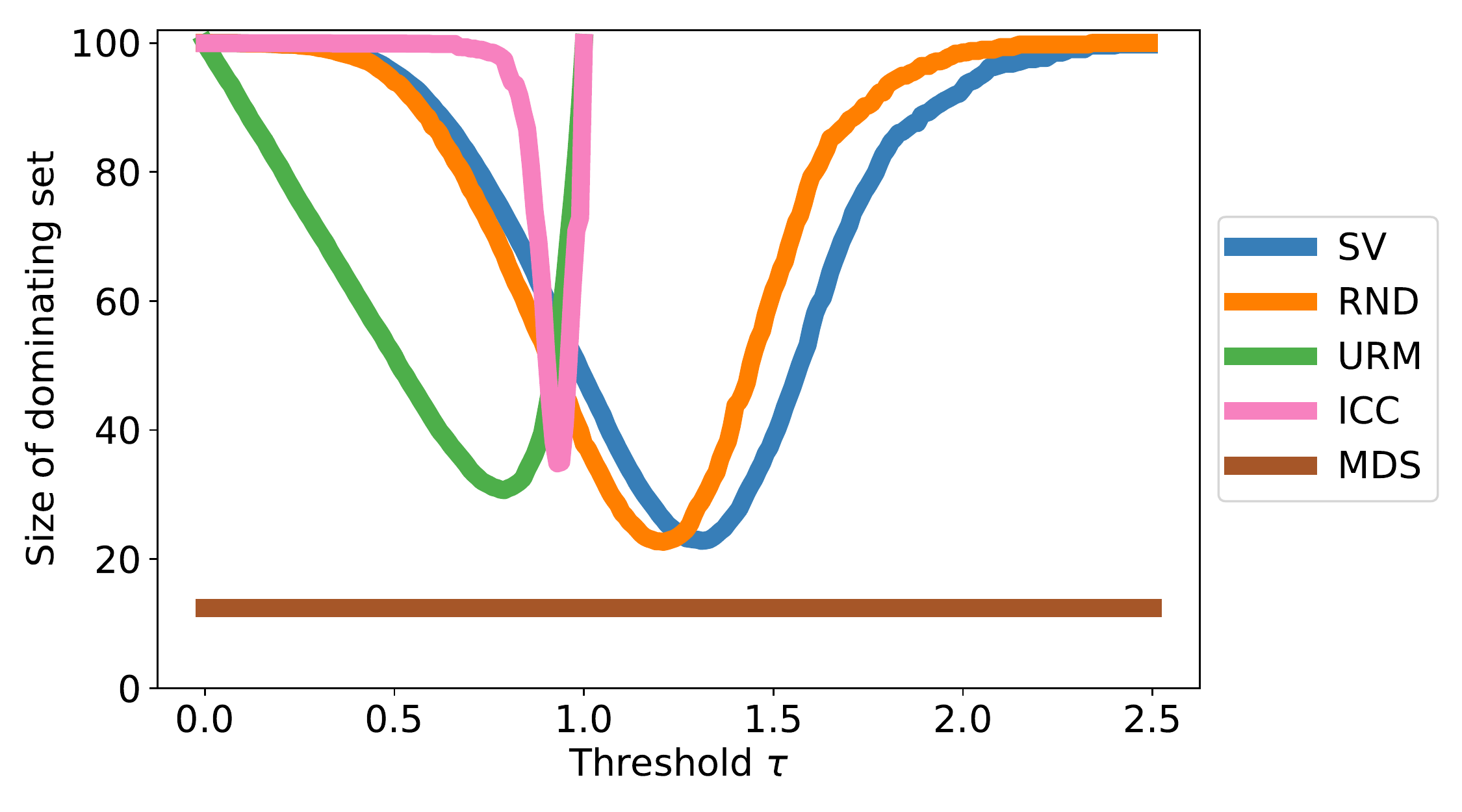}
    \caption{Erd\H{o}s-R\'enyi Model with 100 nodes and probability $0.1$.}
    \label{fig:erdos}
\end{figure}
\begin{figure}[htbp]
    \centering
    \includegraphics[scale = .48]{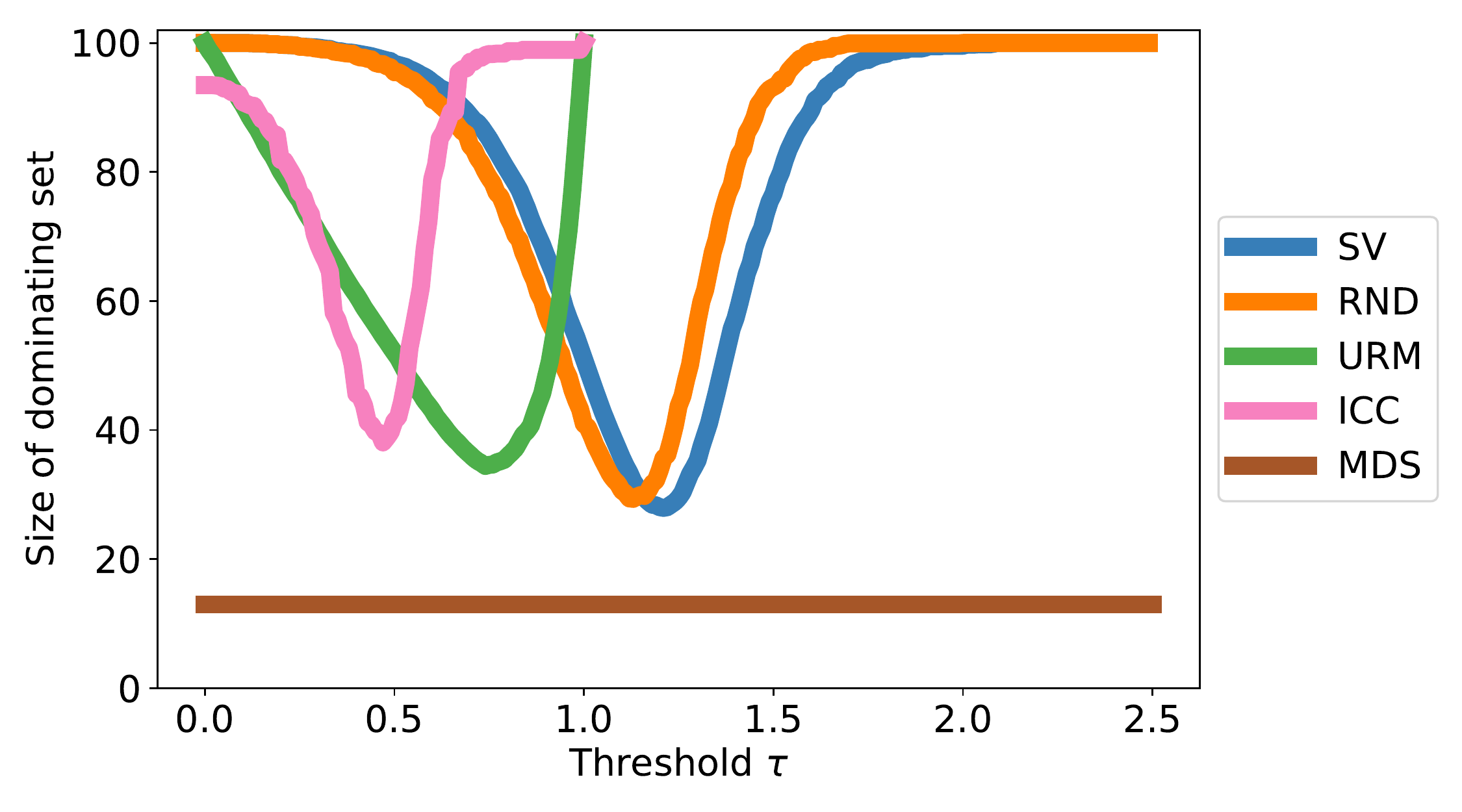}
    \caption{A comparison of the centrality measures in the local domination algorithm for random geometric graphs $G(100,0.178)$.}
    \label{fig:geo}
\end{figure}
\begin{figure}[htbp]
    \centering
    \includegraphics[scale = .48]{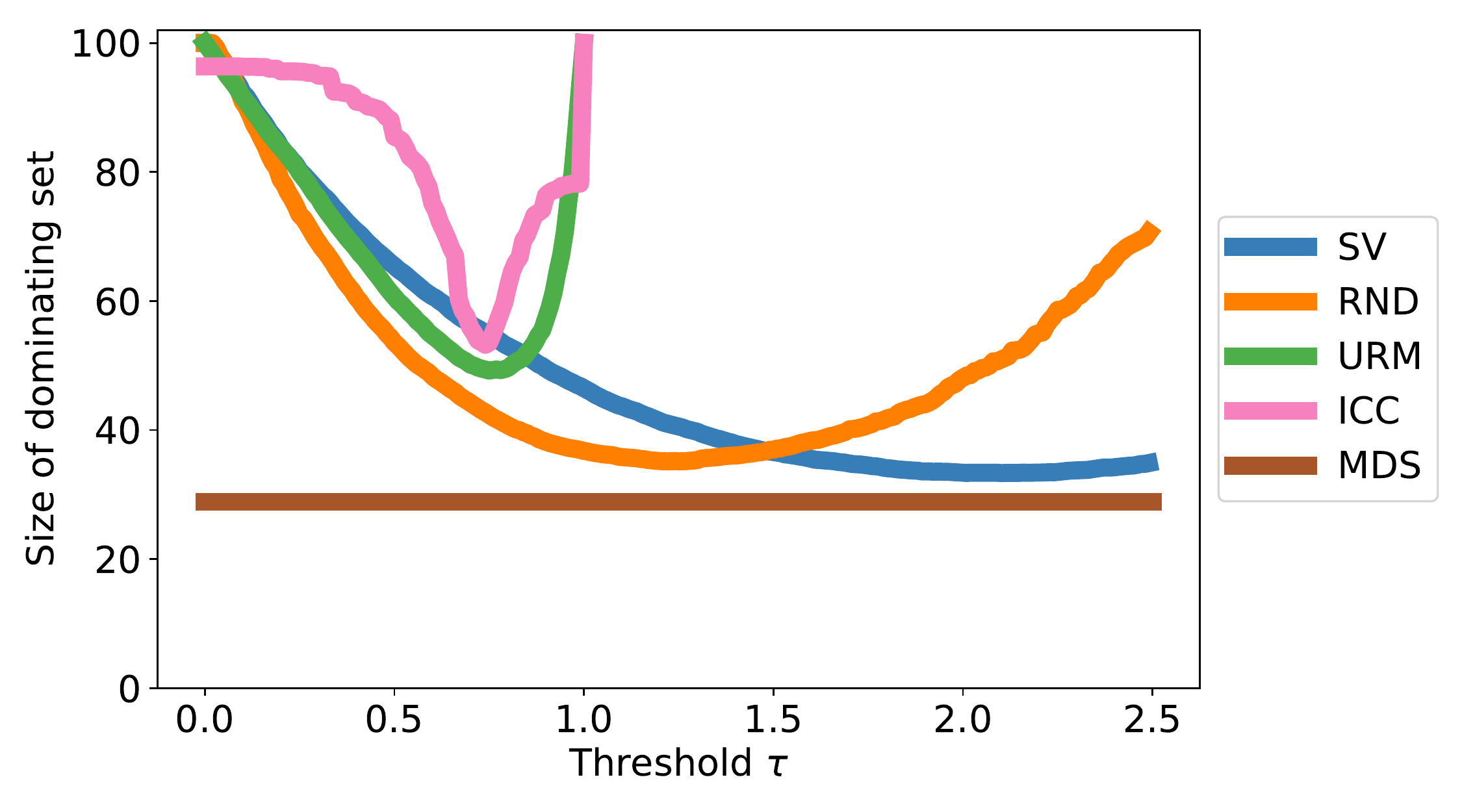}
    \caption{Chung-Lu model with 100 nodes and a degree distribution sampled from a negative binomial distribution.}
    \label{fig:chung}
\end{figure}

\section{Real-world data}\label{sec:Application}

We test the two-step algorithm using the Shapley value on two real-world networks: the Iridium satellite network and the European Natural Gas Pipeline network.
We first show how the algorithm performs with $\tau = 1$.
In the case of the Iridium network, we see a case where choosing a different threshold offers us a smaller dominating set.

\subsection{Iridium Satellite Network}
The Iridium satellites form a network in Low-Earth Orbit.
Figure \ref{fig:soap1} indicates which nodes are selected in the two-step algorithm (in pink) with the Shapley value and threshold $\tau = 1$.
Figure \ref{fig:soap12} shows the dominating set chosen by the two-step algorithm with the Shapley value when $\tau = 1.2$.

\begin{figure}[htbp]
    \begin{subfigure}{.49\textwidth}
        \centering
        \includegraphics[angle = 0,scale = .23]{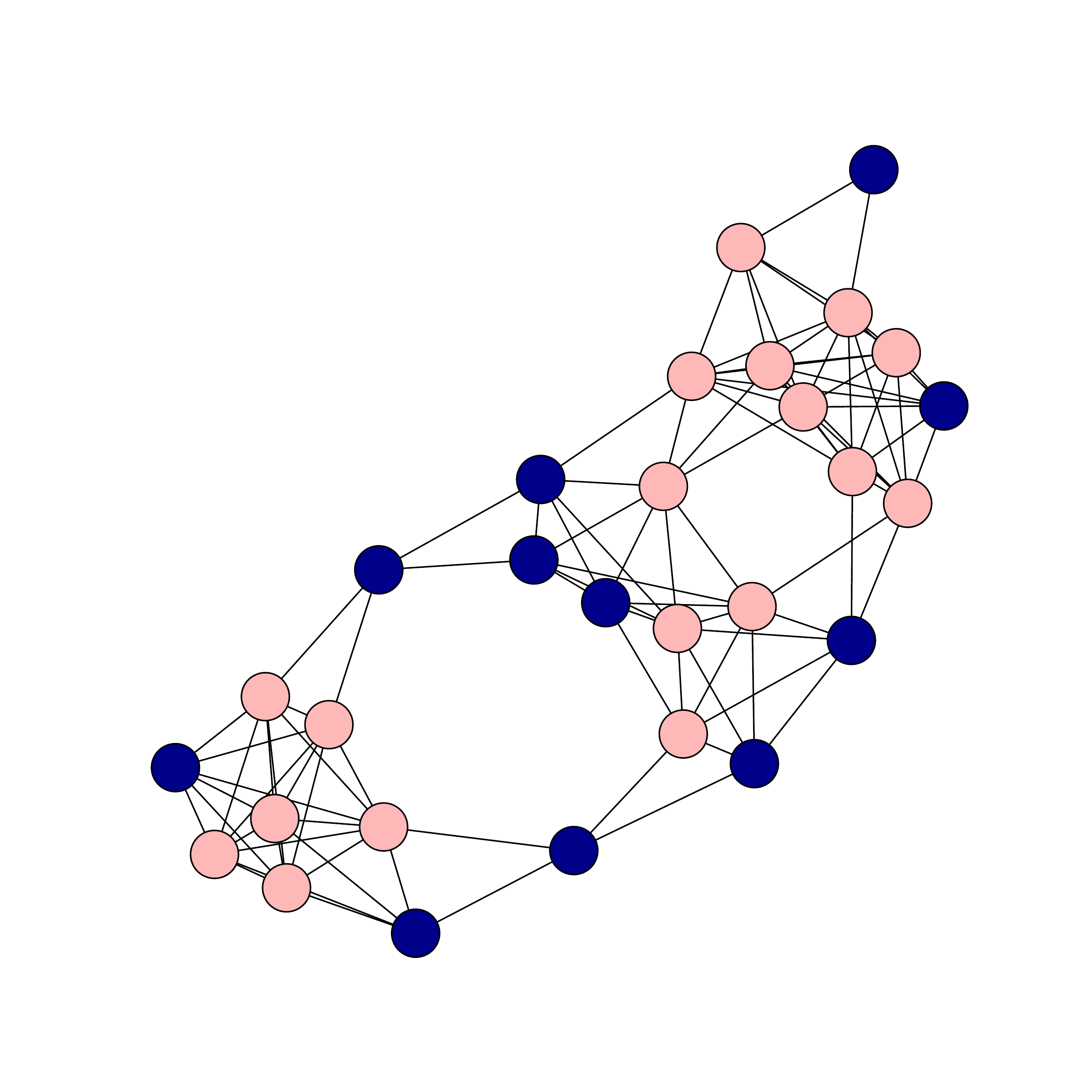}
        \caption{$\tau = 1$}
        \label{fig:soap1}
    \end{subfigure}
    \hfill
    \begin{subfigure}{.49\textwidth}
        \centering
        \includegraphics[angle = 0,scale = .23]{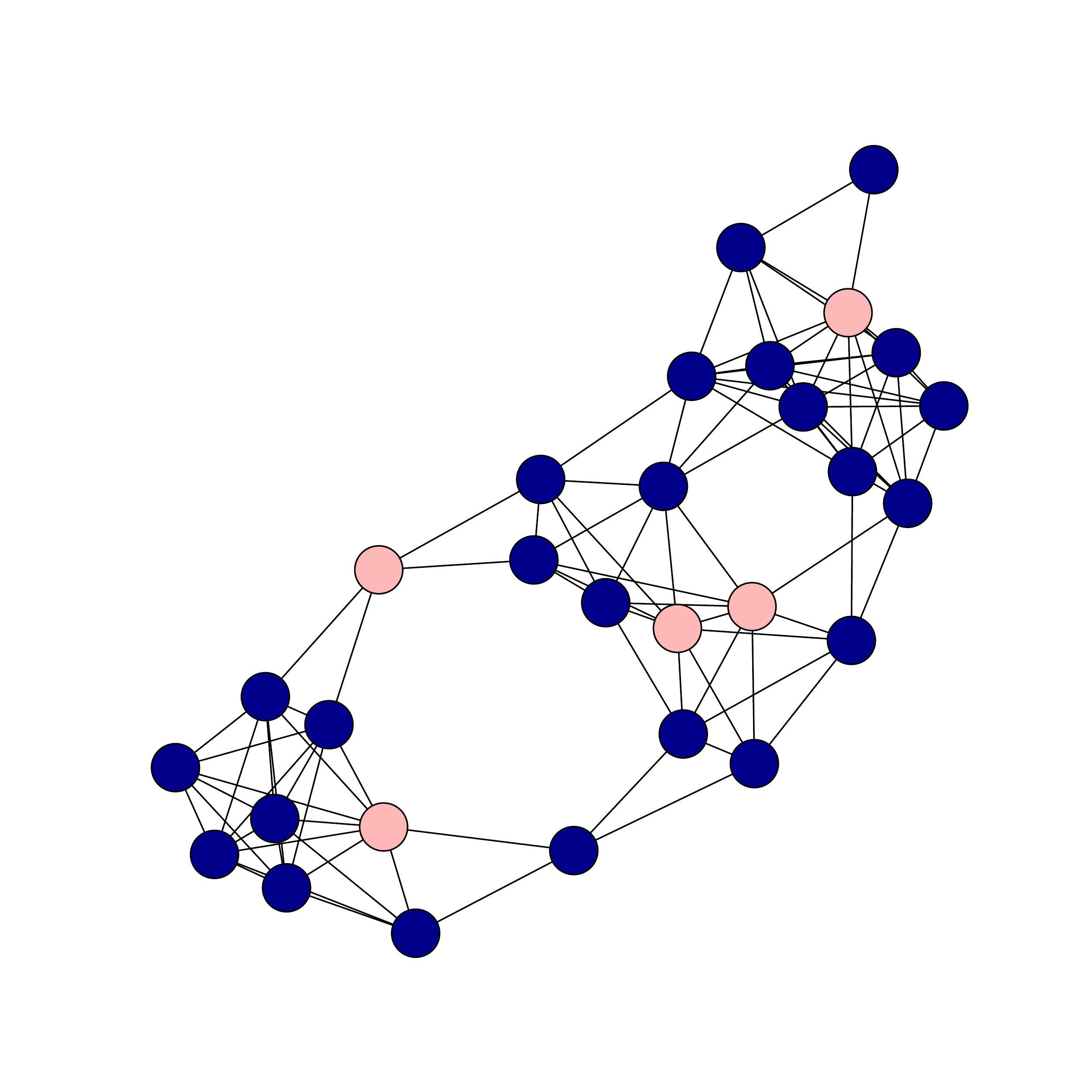}
        \caption{$\tau = 1.2$}
        \label{fig:soap12}
    \end{subfigure}
    \caption{Comparing thresholds in the two-step algorithm with $\SV$ centrality measure on an instance of the Iridium satellite network. The color pink indicates a node in the dominating set.}
\end{figure}

\subsection{European Natural Gas Pipeline Network}

The International Energy Agency (IEA) data, analysis, and policy recommendations have collected data from 31 participating countries to create the European natural gas network (ENGN).
We use the Shapley value in the two-step algorithm to find a dominating set. 
In Figure \ref{fig:piplines}, the nodes colored pink are selected by the two-step algorithm with $\SV$ centrality measure and a threshold of $\tau = 1$.  

Figure \ref{fig:IEA_subgraph} is a subgraph of the ENGN where the Shapley value scales each node. 
This visual represents how the Shapley value can vary quite a bit from node to node.

\begin{figure}[H]
    \centering
    \includegraphics[scale = .4]{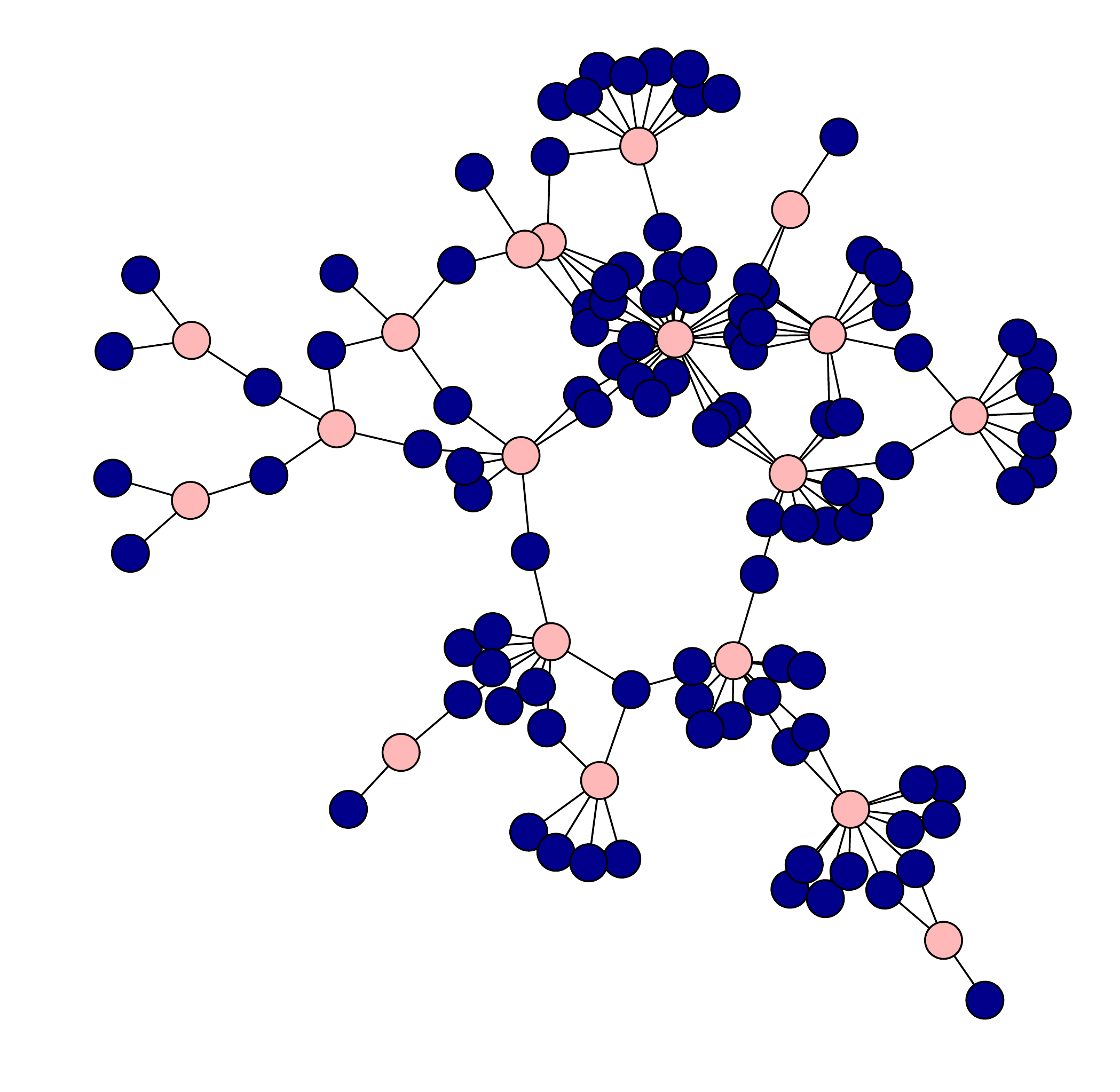}
    \caption{European Natural Gas Pipeline Network with dominating set.}
    \label{fig:piplines}
\end{figure}

\begin{figure}[H]
    \centering
    \includegraphics[scale = .18]{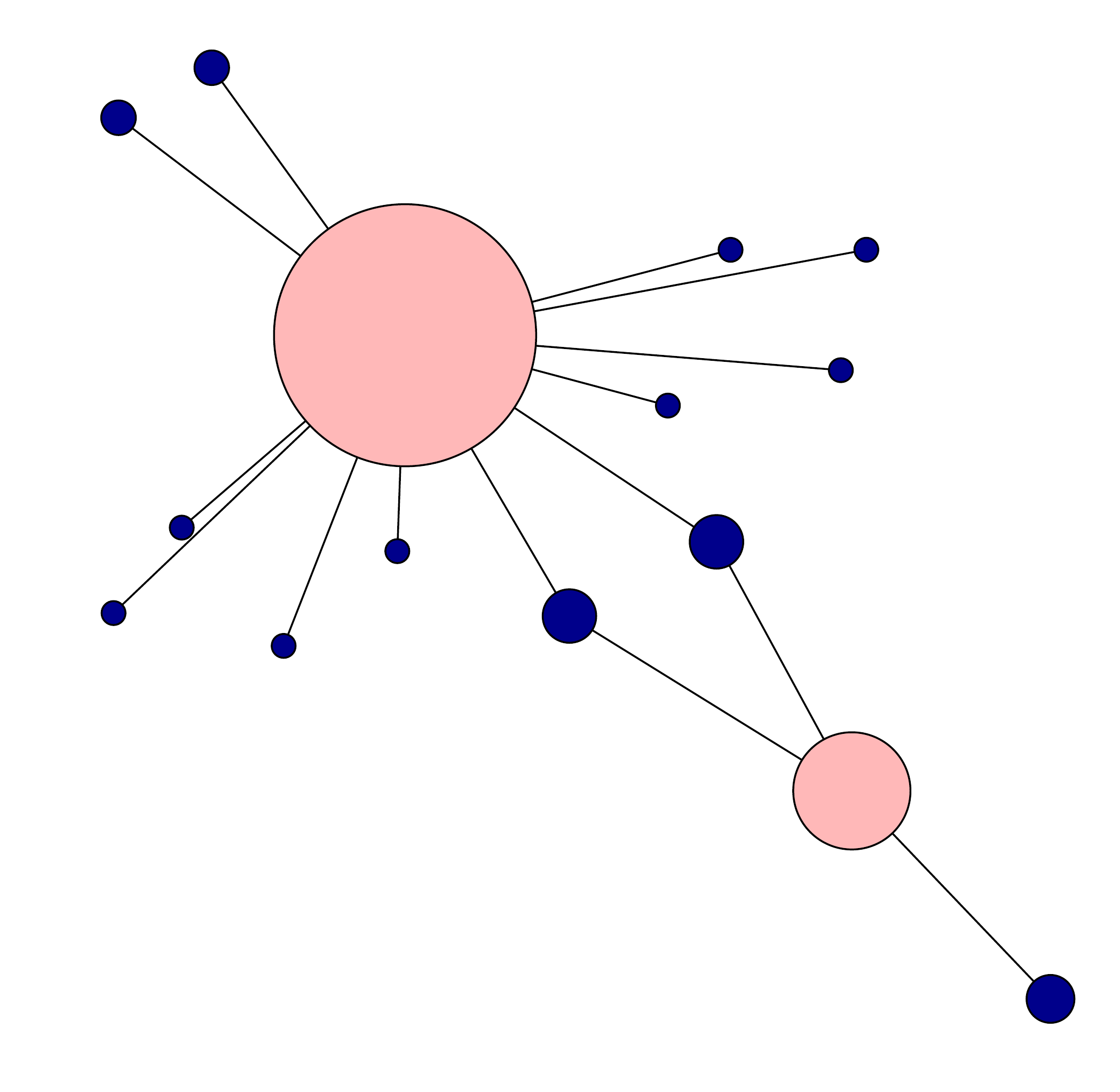}
    \caption{Subgraph of the European Natural Gas Pipeline Network. Each node size is scaled to its Shapley value.}
    \label{fig:IEA_subgraph}
\end{figure}

\section{Oddities}\label{sec:oddities}

This section presents a set of counterexamples to natural conjectures regarding the two-step algorithm with the $\SV$ centrality measure. These counterexamples are enlightening, revealing exciting relationships between the measures we examined here and some general oddities surrounding the two-step algorithm.

With a threshold in the first filtering for the two-step algorithm, much of our questioning centered on choosing the best threshold.
Inspired by Wu and Li \cite{wu2001dominating}, we initially sought a universal threshold for constructing a dominating set.
If there were such a threshold for $\SV$ or $\RND$, we could reduce our two-step algorithm to just one step: compute the measure and determine whether or not you are in the dominating set.
If the threshold is $ 0 $, only one step is necessary, but perhaps a small non-zero threshold will also guarantee a dominating set in one step.

It turns out that there is no such non-zero threshold, but our construction to show this is unwieldy and unlikely to appear in graphs with relatively small numbers of nodes.
We present this construction in the proof of the following proposition.

\begin{proposition}
\label{prop:NoUniversal}
Given a threshold $\tau>0$, let $S_{\tau}$ be the set of nodes such that $\SV(v)>\tau$.
There exists a graph $G$ such that $S_{\tau}$ is not a dominating set of $G$.
\end{proposition}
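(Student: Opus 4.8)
The plan is to exhibit, for each $\tau>0$, a graph containing a node $v$ that is left undominated by $S_\tau$; that is, a node $v$ with $\SV(v)\le\tau$ and $\SV(u)\le\tau$ for every $u\in N(v)$, so that neither $v$ nor any of its neighbors lands in $S_\tau$. Recalling that $\SV(w)=\sum_{x\in N(w)}1/\deg(x)$, making a Shapley value small forces a node's neighbors to have large degree. The crucial observation is that it does not suffice to surround $v$ by high-degree neighbors in order to push $\SV(v)$ below $\tau$: each such neighbor $u$ must itself have small Shapley value, which in turn forces the neighbors of $u$ --- the second neighborhood of $v$ --- to have even larger degree. This cascade is exactly what makes the construction large.

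First I would fix a ``layered'' graph built from three complete bipartite joins. Let $v$ be a central node adjacent to a set $L_1$ of $a$ nodes; join $L_1$ completely to a set $L_2$ of $b-1$ nodes; and join $L_2$ completely to a set $L_3$ of $c-a$ nodes, with no other edges. A direct degree count gives $\deg(v)=a$, $\deg(u)=b$ for $u\in L_1$, and $\deg(w)=c$ for $w\in L_2$ (the degrees inside $L_3$ are irrelevant). Hence $\SV(v)=a/b$, while for $u\in L_1$ the only neighbors are $v$ and all of $L_2$, so $\SV(u)=1/a+(b-1)/c$.

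Next I would choose the parameters to make both quantities at most $\tau$. Pick $a>2/\tau$ so that $1/a<\tau/2$, then $b>a/\tau$ so that $\SV(v)=a/b<\tau$, and finally $c>2(b-1)/\tau$ so that $(b-1)/c<\tau/2$ and therefore $\SV(u)=1/a+(b-1)/c<\tau$. With these choices $v\notin S_\tau$ and $N(v)\cap S_\tau=\emptyset$, so $v$ is undominated and $S_\tau$ fails to be a dominating set. (For $\tau\ge 1$ this succeeds with room to spare; one could equally note that any $d$-regular graph has $\SV\equiv 1$, whence $S_\tau=\emptyset$.)

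The main obstacle is conceptual rather than computational: recognizing that the escalation must be carried out for two full layers --- controlling $\SV$ on $\{v\}\cup N(v)$ requires controlling degrees out to distance two --- and that the third layer $L_3$ is needed only as ``padding'' to raise the degrees in $L_2$ without introducing new low-degree neighbors of the $L_2$ vertices. The arithmetic is routine, but verifying that the prescribed degree profile is realizable by a \emph{simple} graph, and that no unintended short edges spoil the two Shapley computations, is the fiddly part; the resulting graph has on the order of $c=\Theta(1/\tau^{3})$ nodes, which is why the construction is unwieldy and essentially invisible on graphs with few nodes.
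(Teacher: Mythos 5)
Your proposal is correct and follows essentially the same strategy as the paper: a three-layer construction around a central node $v$ in which degrees escalate outward so that $\SV(v)<\tau$ and $\SV(u)<\tau$ for every $u\in N(v)$, leaving $v$ undominated. The only difference is cosmetic --- you realize the required degree profile with complete bipartite joins between shared layers, whereas the paper gives each node its own private set of pendant children (a tree), and your split of the budget for $\SV(u)$ into two halves of $\tau$ matches the paper's choice of $N_3>\frac{N_2-1}{\tau-1/N_1}$ in spirit.
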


\begin{proof}
We construct a graph with node $a$ that satisfies the following properties:
\begin{enumerate}
    \item For node $a$, $\SV(a) < \tau$.
    \item For all nodes $b \in N(a)$, $\SV(b)<\tau$.
\end{enumerate}

Given $\tau>0$, there exists some $N_1 \in \NN$ such that $\frac{1}{N_1}<\tau$.
So, we can construct a set of $N_1$ nodes $\set{b_i}_{i=1}^{N_1}$ such that each $b_i$ is adjacent to $a$.
Thus, $\deg(a)=N_1$, and $N(a) = \set{b_i}_{i=1}^{N_1}$.

Since $N_1,\tau>0$, there exists $N_2 \in \NN$ such that $N_2>\frac{N_1}{\tau}$.
For each $b_i$, construct $N_2-1$ nodes $\set{c_{i,j}}_{j=1}^{N_2-1}$ such that each is adjacent to $b_i$.
Thus, $\deg(b_i) = N_2$, and $N(b_i) = \set{a}\cup \set{c_{i,j}}_{j=1}^{N_2-1}$ for each $i$.
Then, by construction, $\SV(a)= \frac{N_1}{N_2}<\tau$.

Since $\frac{1}{N_1}<\tau$, $\tau-\frac{1}{N_1}>0$, and since $N_2 \in \NN$, $N_2-1\geq 0$.
Thus, there exists $N_3 \in \NN$ such that $N_3 > \frac{N_2-1}{\tau-\frac{1}{N_1}}$.
So, for each $c_{i,j}$, construct $N_3-1$ nodes $\set{d_{i,j,k}}_{k=1}^{N_3-1}$ such that each is adjacent to $c_{i,j}$.
Thus, $\deg(c_{i,j}) = N_3$ for each $j$.
So, by construction, for each $b_i$,
\[
\SV(b_i) = \frac{1}{\deg(a)} + \sum\limits_{j=1}^{N_2-1}\frac{1}{\deg(c_{i,j})} = \frac{1}{N_1} + \frac{N_2-1}{N_3} < \tau.
\]
Thus, any graph containing this construction as a subgraph would satisfy the conditions above.
\end{proof}

From the construction in Proposition \ref{prop:NoUniversal}, and the relationship between $\RND$ and $\SV$ given in Proposition \ref{prop:IDbigger}, we get the following result.

\begin{corollary}
\label{cor:NoUniversalRND}
Given a threshold $\tau>0$, let $R_{\tau}$ be the set of nodes such that $\RND(v)>\tau$.
There exists a graph $G$ such that $R_{\tau}$ is not a dominating set of $G$.
\end{corollary}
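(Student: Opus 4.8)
The plan is to reuse, verbatim, the graph produced in the proof of Proposition~\ref{prop:NoUniversal} and to transfer its failure to dominate from $\SV$ to $\RND$ by means of the contrapositive of Proposition~\ref{prop:IDbigger}. Concretely, I would fix $\tau>0$, invoke Proposition~\ref{prop:NoUniversal} to obtain a graph $G$ containing a node $a$ with $\SV(a)<\tau$ and $\SV(b)<\tau$ for every $b\in N(a)$, and then show that this same node $a$ remains undominated when we filter by $\RND$ instead of $\SV$.

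The key step is the logical inversion of Proposition~\ref{prop:IDbigger}. That proposition asserts $\RND(v)\geq\tau \Rightarrow \SV(v)\geq\tau$; its contrapositive reads $\SV(v)<\tau \Rightarrow \RND(v)<\tau$. Applying this to $a$ gives $\RND(a)<\tau$, and applying it to each neighbor $b\in N(a)$ gives $\RND(b)<\tau$. Since membership in $R_\tau$ requires the strict inequality $\RND(\cdot)>\tau$, neither $a$ nor any of its neighbors lies in $R_\tau$. Hence $(N(a)\cup\{a\})\cap R_\tau=\emptyset$, so $a$ is not dominated by $R_\tau$, and therefore $R_\tau$ is not a dominating set of $G$.

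The only point that demands care — and the closest thing to an obstacle — is the mismatch between the non-strict inequalities appearing in Proposition~\ref{prop:IDbigger} and the strict inequalities defining $S_\tau$ and $R_\tau$. This turns out to be harmless: the construction in Proposition~\ref{prop:NoUniversal} delivers \emph{strict} inequalities $\SV<\tau$, which through the contrapositive yield \emph{strict} inequalities $\RND<\tau$, comfortably excluding these nodes from the set $R_\tau$ defined by $\RND>\tau$. No boundary case at $\RND=\tau$ arises, so the argument goes through with no modification to the underlying construction, and the corollary follows immediately.
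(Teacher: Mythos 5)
Your proposal is correct and matches the paper's intended argument exactly: the paper derives the corollary from the construction in Proposition~\ref{prop:NoUniversal} combined with the contrapositive of Proposition~\ref{prop:IDbigger}, which is precisely what you do. Your added care about the strict versus non-strict inequalities is sound and only strengthens the write-up.
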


The choice of threshold then depends on the graph structure we are examining.
However, in the plots we generated, the graphs appeared to have a consistent shape, including a single local minimum.
For the minimum to be unique, one condition would be that for any three thresholds $\tau < \sigma < \rho$, one could never see $f(\tau)<f(\sigma)>f(\rho)$.

Unfortunately, this is not true in general.
Figure \ref{fig:monotone} depicts one counterexample of this idea.
In particular, when we use $\SV$ with the two-step algorithm at three different thresholds ($5/6$, $1$, and $11/6$), the size of the dominating set generated goes from $19$ to $20$ and back to $19$.

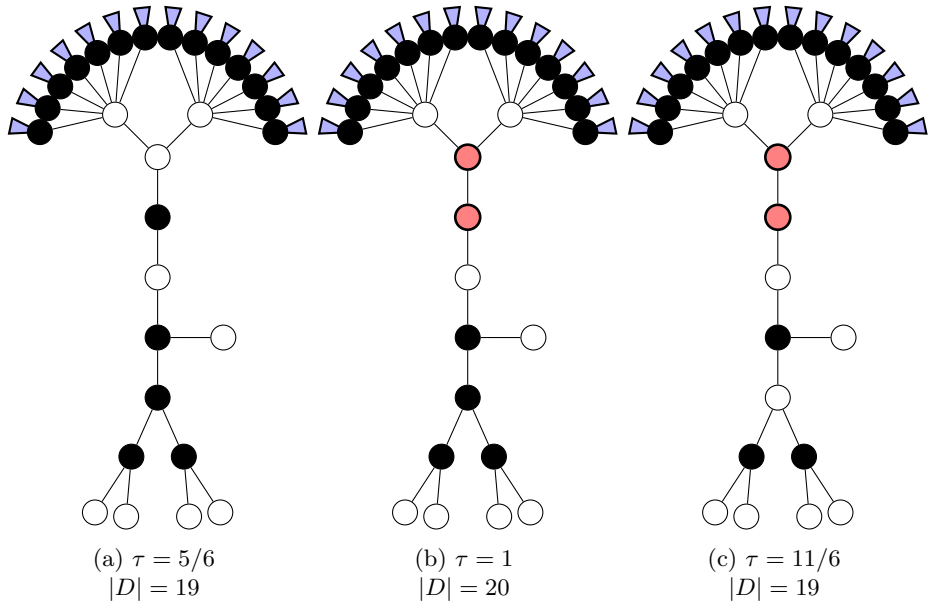
\begin{figure}[htbp]
    \begin{subfigure}{.32\textwidth}
        \centering
        
\begin{tikzpicture}[scale = .8]
    \node[circle,draw] (a) at ({45}:1){};
    \node[circle,draw] (b) at ({135}:1){};
    \foreach \a in {1,...,7}{
        \draw[fill=blue!30,line width=.8pt] (\a*12:2) -- (\a*12-3:2.5) -- (\a*12+3:2.5) -- (\a*12:2);
        \node[circle,draw,fill, black] (u\a) at ({\a*12}:2){};
        \draw (a) -- (u\a);
    }
    \foreach \a in {8,...,14}{
        \draw[fill=blue!30,line width=.8pt] (\a*12:2) -- (\a*12-3:2.5) -- (\a*12+3:2.5) -- (\a*12:2);
        \node[circle,draw,fill,black] (u\a) at ({\a*12}:2){};
        \draw (b) -- (u\a);
    }

    \node[circle,draw] (c) at (0:0){};
    \draw (a) -- (c);
    \draw (b) -- (c);
    \node[circle,draw,fill,black] (d) at (270:1){};
    \draw (d) -- (c);
    \node[circle,draw] (e) at (270:2){};
    \draw (d) -- (e);
    \node[circle,draw,fill,black] (f) at (270:3){};
    \node[circle,draw] (f1) at (290:3.193){};
    \draw (e) -- (f);
    \draw (f1) -- (f);
    \node[circle,draw,fill,black] (g) at (270:4){};
    \draw (f) -- (g);
    \node[circle,draw,fill,black] (h1) at (265:5){};
    \draw (g) -- (h1);
    \node[circle,draw,fill,black] (h2) at (275:5){};
    \draw (g) -- (h2);
    \node[circle,draw] (h11) at (260:6){};
    \draw (h1) -- (h11);
    \node[circle,draw] (h12) at (265:6){};
    \draw (h1) -- (h12);
    \node[circle,draw] (h21) at (275:6){};
    \draw (h2) -- (h21);
    \node[circle,draw] (h22) at (280:6){};
    \draw (h2) -- (h22);
\end{tikzpicture}
        \caption{$\tau = 5/6$\\$|D| = 19$}
        \label{fig:monotone56}
    \end{subfigure}
    \hfill
    \begin{subfigure}{.32\textwidth}
        \centering
        
\begin{tikzpicture}[scale = .8]
    \node[circle,draw] (a) at ({45}:1){};
    \node[circle,draw] (b) at ({135}:1){};
    \foreach \a in {1,...,7}{
        \draw[fill=blue!30,line width=.8pt] (\a*12:2) -- (\a*12-3:2.5) -- (\a*12+3:2.5) -- (\a*12:2);
        \node[circle,draw,fill, black] (u\a) at ({\a*12}:2){};
        \draw (a) -- (u\a);
    }
    \foreach \a in {8,...,14}{
        \draw[fill=blue!30,line width=.8pt] (\a*12:2) -- (\a*12-3:2.5) -- (\a*12+3:2.5) -- (\a*12:2);
        \node[circle,draw,fill,black] (u\a) at ({\a*12}:2){};
        \draw (b) -- (u\a);
    }

    \node[circle,fill=red!50,draw=black,line width=1pt] (c) at (0:0){};
    \draw (a) -- (c);
    \draw (b) -- (c);
    \node[circle,fill=red!50,draw=black,line width=1pt] (d) at (270:1){};
    \draw (d) -- (c);
    \node[circle,draw] (e) at (270:2){};
    \draw (d) -- (e);
    \node[circle,draw,fill,black] (f) at (270:3){};
    \node[circle,draw] (f1) at (290:3.193){};
    \draw (e) -- (f);
    \draw (f1) -- (f);
    \node[circle,draw,fill,black] (g) at (270:4){};
    \draw (f) -- (g);
    \node[circle,draw,fill,black] (h1) at (265:5){};
    \draw (g) -- (h1);
    \node[circle,draw,fill,black] (h2) at (275:5){};
    \draw (g) -- (h2);
    \node[circle,draw] (h11) at (260:6){};
    \draw (h1) -- (h11);
    \node[circle,draw] (h12) at (265:6){};
    \draw (h1) -- (h12);
    \node[circle,draw] (h21) at (275:6){};
    \draw (h2) -- (h21);
    \node[circle,draw] (h22) at (280:6){};
    \draw (h2) -- (h22);
\end{tikzpicture}
        \caption{$\tau = 1$\\$|D| = 20$}
        \label{fig:monotone1}
    \end{subfigure}
    \hfill
    \begin{subfigure}{.32\textwidth}
        \centering
        
\begin{tikzpicture}[scale = .8]
    \node[circle,draw] (a) at ({45}:1){};
    \node[circle,draw] (b) at ({135}:1){};
    \foreach \a in {1,...,7}{
        \draw[fill=blue!30,line width=.8pt] (\a*12:2) -- (\a*12-3:2.5) -- (\a*12+3:2.5) -- (\a*12:2);
        \node[circle,draw,fill, black] (u\a) at ({\a*12}:2){};
        \draw (a) -- (u\a);
    }
    \foreach \a in {8,...,14}{
        \draw[fill=blue!30,line width=.8pt] (\a*12:2) -- (\a*12-3:2.5) -- (\a*12+3:2.5) -- (\a*12:2);
        \node[circle,draw,fill,black] (u\a) at ({\a*12}:2){};
        \draw (b) -- (u\a);
    }

    \node[circle,fill=red!50,draw=black,line width=1pt] (c) at (0:0){};
    \draw (a) -- (c);
    \draw (b) -- (c);
    \node[circle,fill=red!50,draw=black,line width=1pt] (d) at (270:1){};
    \draw (d) -- (c);
    \node[circle,draw] (e) at (270:2){};
    \draw (d) -- (e);
    \node[circle,draw,fill,black] (f) at (270:3){};
    \node[circle,draw] (f1) at (290:3.193){};
    \draw (e) -- (f);
    \draw (f1) -- (f);
    \node[circle,draw] (g) at (270:4){};
    \draw (f) -- (g);
    \node[circle,draw,fill,black] (h1) at (265:5){};
    \draw (g) -- (h1);
    \node[circle,draw,fill,black] (h2) at (275:5){};
    \draw (g) -- (h2);
    \node[circle,draw] (h11) at (260:6){};
    \draw (h1) -- (h11);
    \node[circle,draw] (h12) at (265:6){};
    \draw (h1) -- (h12);
    \node[circle,draw] (h21) at (275:6){};
    \draw (h2) -- (h21);
    \node[circle,draw] (h22) at (280:6){};
    \draw (h2) -- (h22);
\end{tikzpicture}
        \caption{$\tau = 11/6$\\$|D| = 19$}
        \label{fig:monotone116}
    \end{subfigure}
    \caption{The size of the dominating set formed by the two-step algorithm with the Shapley value does not achieve a unique local minimum. The purple triangles represent 100 leaf nodes. A black node suggests that it was chosen in the first round of the two-step algorithm, and red suggests it was chosen in the second round.}
    \label{fig:monotone}
\end{figure}

We conclude our catalog of oddities by focusing on the threshold of 1.
Thanks to Proposition \ref{prop:average}, we know that $1$ holds a special place concerning the $\SV$ and $\RND$ values.

However, returning to the data represented in Figures \ref{fig:geo},\ref{fig:chung}, and \ref{fig:erdos}, there was one further counterexample to consider.
When we set the threshold at $1$, the dominating set generated with $\RND$ was generally larger than the dominating set developed with $\SV$. However, Figure \ref{fig:sizeatone} shows an example where the number of nodes selected in the two-step algorithm is smaller in $\SV$ (Figure~\ref{fig:SV}) than in $\RND$ (Figure~\ref{fig:RND}) using the threshold $\tau = 1$.
Here, the black nodes are chosen in the first round and the pink nodes in the second.

\begin{figure}[htbp]
     \centering
     \begin{subfigure}[b]{0.41\textwidth}
         \centering
          \begin{tikzpicture}
    \node[circle,draw,fill,black] (u0) at ({0}:0){};
    \foreach \a in {1,...,8}
    {
        \node[circle,draw,fill,black] (u\a) at ({\a*45}:1){};
    }
    \foreach \a in {9,...,24}
    {
        \node[circle,draw] (u\a) at ({(\a-8)*22.5+11.25}:2){};
    }
    \foreach \a in {1,...,8}
    {
        \draw (u0) -- (u\a);
    }
    \foreach \a in {1,...,8}
    {
    \pgfmathtruncatemacro{\b}{(2*\a+7};
    \pgfmathtruncatemacro{\c}{(2*\a+8};
        \draw (u\a) -- (u\b);
        \draw (u\a) -- (u\c);
    }
    \end{tikzpicture}
         \caption{$\SV$}
         \label{fig:SV}
     \end{subfigure}
     \hfill
     \begin{subfigure}[b]{0.41\textwidth}
         \centering
          \begin{tikzpicture}
    \node[circle,draw,fill,black] (u0) at ({0}:0){};
    \foreach \a in {1,...,8}
    {
        \node[circle,draw] (u\a) at ({\a*45}:1){};
    }
    \foreach \a in {9,...,24}
    {
        \node[circle,fill=red!50,draw=black,line width=1pt] (u\a) at ({(\a-8)*22.5+11.25}:2){};
    }
    \foreach \a in {1,...,8}
    {
        \draw (u0) -- (u\a);
    }
    \foreach \a in {1,...,8}
    {
    \pgfmathtruncatemacro{\b}{(2*\a+7};
    \pgfmathtruncatemacro{\c}{(2*\a+8};
        \draw (u\a) -- (u\b);
        \draw (u\a) -- (u\c);
    }
    \end{tikzpicture}
         \caption{$\RND$}
         \label{fig:RND}
     \end{subfigure}
        \caption{An example where the dominating set generated using $\RND$ is greater than the dominating set generated using $\SV$.}
        \label{fig:sizeatone}
\end{figure}
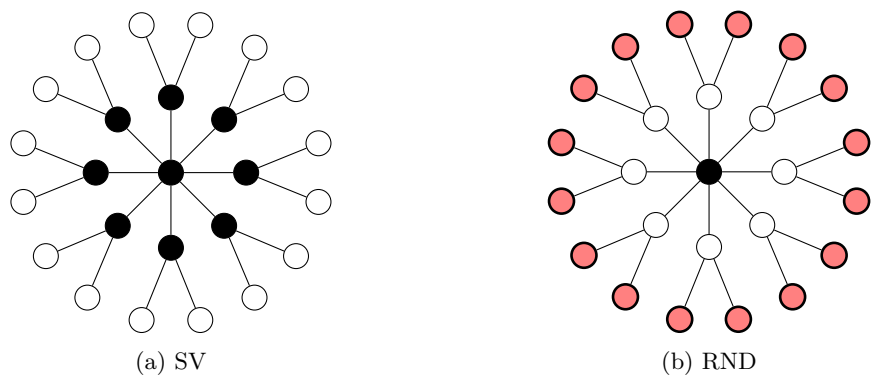

There is undoubtedly more exploration in characterizing the function described in the plots.
Both experimentally and theoretically, further investigation is warranted.

\section{Future directions}\label{sec:Conclusion}

Various algorithms aim to approximate a minimum dominating set for graphs \cite{lu2010survey}.
Our approach of using a two-step algorithm allows for a larger variety of node measures than we propose for generating dominating sets.
However, we focused on four local measures to highlight some methods that could be practical in robot swarm and autonomous system communications.

We see several directions to take this work into the future.
First, an analysis of different measures and their properties when plugged into the two-step algorithm may be an excellent method for comparing the effects of other measures.
Included with this, a more in-depth theoretical analysis of the two-step algorithm is warranted to explore its effectiveness further.

One concept we have toyed with is constructing bounds for the thresholds in various situations.
For example, one might wonder whether there are certain classes of networks that yield natural bounds.
For instance, with regular graphs, the Shapley value at each node is necessarily 1. In graphs that are regular or close to regular, it may be effective to combine for example the Shapley value with a random measure.
Predicting optimal thresholds based on network structure alone or time-varying network structure alone is a powerful tool in robotic mission design.
Moreover, if the ability to find bounds analytically proves intractable, passing this information into a neural network may be an exciting problem to pursue.
A Graph Neural Network that intakes graph structure and outputs a substantial threshold for achieving a dominating set could enable more optimizations, especially if this could be computed locally in some way.

% Future work:
% Properties of different measures as applied to the two-step algorithm
% Expected thresholds for graphs with specific properties
% GNN learning optimal thresholds for other structures
% Theoretical properties of a two-step algorithm
% Others?
% - Bounds for thresholds
%      - developing theory around this algorithm
%      - using different measures
%      - classes of networks where this works well or doesn't work well
%      - different algorithms?

\section*{Acknowledgements}\label{sec:Acknowledgements}
The authors would like to acknowledge the Space Communications and Navigation program at NASA for enabling helpful discussions and feedback throughout this work. 
Hunter Rehm is supported by the Vermont Space Grant Consortium (VTSGC) through the Graduate Fellowship Program.

%\footnotesize{
%\bibliographystyle{plain}
%\bibliography{local_centrality_may_2023.bib}}

\end{document}